\documentclass[american,a4paper,twoside,12pt]{article}

\usepackage{times,color,babel}
\usepackage{amssymb,amsthm,amsmath,bm}
\usepackage[T1]{fontenc}
\usepackage[flushmargin]{footmisc}   
\usepackage{dsfont}    
\usepackage{graphicx}
\usepackage{enumerate}
\usepackage[colorlinks=true,hyperindex=true]{hyperref}
\usepackage{fancyhdr}
\usepackage{a4wide}
%%%%%% Headers and margins %%%%%%%%
%%
\pagestyle{fancy}
\fancyhead{} % clear all header fields
\fancyhead[RO]{Temperature and voltage probes far from equilibrium}
\fancyhead[LE]{Ph. A. Jacquet and C.-A. Pillet}
\fancyfoot{} % clear all footer fields
\fancyfoot[LE,RO]{\thepage}

\setlength{\oddsidemargin}{0cm}
\setlength{\topmargin}{0cm}
\addtolength{\footskip}{1cm}
\setlength{\parindent}{0pt}
\setlength{\parskip}{4pt}
\newcommand{\matrice}[2][cccccccccccccccccc]{%
\left(\hspace{-1mm}\begin{array}{#1}#2\\ \end{array}\hspace{-1mm}\right)}

\let\rho=\varrho

\def\ie{{\sl i.e.,} }

\def\t{{\rm t}}

\def\e{{\rm e}}

\def\d{{\rm d}}
\def\p{{\rm p}}
\def\t{{\rm t}}

\def\real{{\mathbb R}}
\def\eref#1{(\ref{#1})}

\def\Theorem{{Theorem}}

\newtheorem{theorem}{Theorem}

\newtheorem{lemma}[theorem]{Lemma}

%%%%%%%%%%%%%%%%%%%%%%%%%%%%%%%%%%%%%%%%%%%%%%%%%%%%%%%%%%%%%%%%%%%%%%%%%%%%%%%%%%%%%%%%%%%%

\begin{document}
\title{\bf Temperature and voltage probes\\ far from equilibrium}

\author{\sc Ph.~A.~Jacquet$^{1}$ and C.-A. Pillet$^{2}$
\\ \\ \\
$^1$Department of Physics, Kwansei Gakuin University, Sanda 669-1337 Japan
\\ \\
$^2$Aix-Marseille Univ, CPT, 13288 Marseille cedex 9, France\\
CNRS, UMR 7332, 13288 Marseille cedex 9, France\\
Univ Sud Toulon Var, CPT, B.P. 20132, 83957 La Garde cedex, France\\
FRUMAM
}
\def\today{}
\maketitle
\thispagestyle{empty}
\vskip 1.5truecm
\begin{quote}
{\noindent\bf Abstract.}
We consider an open system of non-interacting electrons consisting of a small sample connected 
to several reservoirs and temperature or voltage probes. We study the non-linear system of equations
that determines the probe parameters. We show that it has a unique solution, which can be 
computed with a fast converging iterative algorithm. We illustrate our method with two well-known 
models: the three-terminal system and the open Aharovov-Bohm interferometer.
\end{quote}
%%%%%%%%%%%%%%%%%%%%%%%%%%%%%%%%%%%%%%%%%%%%%%%%%%%%%%%%%%%%%%%%%%%%%%%%%%%%%%%%%%%%%%%%%%%%

\section{Introduction}
\label{Introduction}

Thermodynamic quantities such as entropy, temperature, or chemical potential play a fundamental
role in our understanding of equilibrium phenomena. They are given sound microscopic meanings 
within the framework of equilibrium statistical mechanics. The concept of local thermal equilibrium
allows us in principle to define these quantities in interacting systems close to equilibrium.
However, extending these definitions far from equilibrium and/or to non-interacting systems 
where local equilibrium does not make sense is a much more delicate issue (see the discussions 
in \cite{Jou,Ru1,Ru2}). In this paper, we shall consider an operational point of view, giving
to local intensive parameters the values measured by external probes.
 
Such an experimental approach is well-known in the mesoscopic community: in the description of 
electric transport in a multi-terminal system, in which all the terminals have the same temperature 
(typically $T=0$), one often introduces a \emph{voltage probe} \cite{ButtikerFourT} to sense the local 
electrochemical potential by connecting an additional electronic reservoir under the 
\emph{zero electric current condition}: the chemical potential of the probe is 
tuned so that there is no net average electric current into it. In the same spirit, setting all the 
terminals to the same chemical potential, a \emph{temperature probe} is obtained by requiring that 
the temperature of the corresponding reservoir is tuned such that there is no average heat current 
into it \cite{AE}. 

In the scattering approach of Landauer and B\"uttiker (see Section 2), the existence and uniqueness of such 
parameters are usually accepted on physical grounds, but we think it is important and interesting 
to obtain a rigorous mathematical foundation for these fundamental parameters. In the linear 
response regime, a rigorous proof has recently been given \cite{Thesis}. Here, we shall extend 
these results to the far-from-equilibrium regime and furthermore provide an efficient numerical 
method for computing their values. More explicitly, this paper is organized as follows: in Section~2,
we describe the framework, in Sections~3 and 4, we present our main results and their proofs. Finally, 
in Section 5, we illustrate our method by considering two well-known models: the three-terminal 
system and the open Aharovov-Bohm interferometer.

%%%%%%%%%%%%%%%%%%%%%%%%%%%%%%%%%%%%%%%%%%%%%%%%%%%%%%%%%%%%%%%%%%%%%%%%%%%%%%%%%%%%%%%%%%%%
\noindent{\bf Acknowledgments.}
This work was partially supported by the Swiss National Foundation, the Japan Society for the 
Promotion of Science and the Agence Nationale de la Recherche (contract HAMMARK 
ANR-09-BLAN-0098-01). C.-A. P. wishes to thank Tooru Taniguchi for hospitality at Kwansei 
Gakuin University where part of this work was done.
%%%%%%%%%%%%%%%%%%%%%%%%%%%%%%%%%%%%%%%%%%%%%%%%%%%%%%%%%%%%%%%%%%%%%%%%%%%%%%%%%%%%%%%%%%%%

\section{Framework}
\label{Model}

We consider a multi-terminal mesoscopic system, that is, a small system $\cal S$ connected through 
leads to several infinitely extended particle reservoirs (see Figure \ref{Fig1}). We assume that the 
transport properties of this system can be described within the Landauer-B\"uttiker framework. 
More precisely, we consider $N$ reservoirs in equilibrium at inverse temperature 
$\beta_i=1/(k_{\rm B}T_i)$ and chemical potential $\mu_i$ ($i=1,\ldots,N$). The corresponding
Fermi-Dirac distributions are
\begin{equation}
f_i(E)=f(E,\beta_i,\mu_i) = \left[1+\e^{\beta_i(E-\mu_i)}\right]^{-1}.
\end{equation} 
For simplicity, here and in what follows we set the Boltzmann constant $k_{\rm B}$, the Planck 
constant $h$ and the elementary charge $e$ to unity.

In the Landauer-B\"uttiker formalism, one neglects all interactions among the particles and
considers the small system as a scatterer for the particles emitted by the reservoirs.
Thus, the small system is completely characterized by the one-particle on-shell scattering matrix 
$S(E)=\left[S_{ij;mn}(E)\right]$, where the indices $i,j\in\{1,\dots,N\}$ label the outgoing/incoming 
terminals and for each pair $(i,j)$ the indices $m \in \{1, \dots, M_i(E)\}$ and $n \in \{1, \dots, M_j(E)\}$
label the open channels in terminals $i$ and $j$, respectively. The matrix element $S_{ij;mn}(E)$ is 
the probability \emph{amplitude} for a particle with energy $E$ incident in channel $n$ of terminal 
$j$ to be transmitted into channel $m$ of terminal $i$. The corresponding \emph{total} transmission 
probability $t_{ij}(E)$ that a particle with energy $E$ goes from terminal $j$ to terminal $i$ is given 
by \cite{Buttiker-Symmetry}
\begin{equation}\label{transmission prob}
t_{ij}(E) = \sum_{m=1}^{M_i(E)}
\sum_{n=1}^{M_j(E)} |S_{ij;mn}(E)|^2.
\end{equation}
The unitarity of the scattering matrix immediately yields the following identities,
\begin{equation}\label{Property Transmission}
\sum_{i=1}^{N} t_{ij}(E) = M_j(E), \hspace{5mm} \sum_{j=1}^{N} t_{ij}(E)=M_i(E).
\end{equation}

\begin{figure}[htbp]
\begin{center}
\includegraphics[width=7cm]{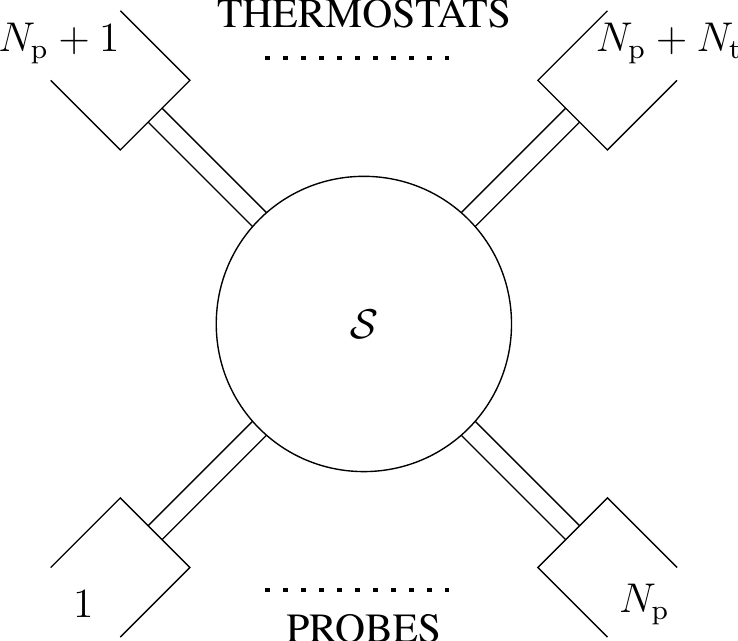}
\caption{A multi-terminal system: The sample is connected through leads to $N=N_\p+N_\t$
particle reservoirs. The reservoirs $1, \dots, N_\p$ are probes and the reservoirs 
$N_\p+1, \dots, N_\p+N_\t$ are thermostats driving the system out of equilibrium.}\label{Fig1}
\end{center}
\end{figure}

The expected stationary electric and heat currents in lead $i \in \{1,\dots,N\}$ are
given by the celebrated Landauer-B\"uttiker formulas \cite{ButtikerScattering,Butcher},
\begin{eqnarray}
I_i &=&  \sum_{j=1}^{N} \int\!\!\left[t_{ji}(E) f_i(E) - t_{ij}(E) f_j(E)\right] \d E,\label{Eq I}\\
J_i &=& \sum_{j=1}^{N} \int\!\!\left[t_{ji}(E) f_i(E) - t_{ij}(E) f_j(E)\right] (E-\mu_i) \d E.\label{Eq J}
\end{eqnarray}
From a phenomenological point of view, these expressions can be easily 
understood: $t_{ji}(E) f_i(E)$ is the average number of particles with energy $E$ that are 
transmitted from terminal $i$ to terminal $j$, and $t_{ij}(E) f_j(E)$ is the same but from terminal 
$j$ to terminal $i$. Therefore, $I_i$ ($J_i$) is the \emph{net} average electric (heat) current in
lead $i$, counted positively from the $i$-th terminal to the system. 
Mathematical derivations of these formulas (including existence of a stationary regime)
rest on the assumption that the leads are infinitely extended and act as reservoirs
\cite{CA,N}.

Considerable interest has been devoted to electric transport in which all the terminals have the same 
temperature. In this context, an important concept emerged: the \emph{voltage probe} 
\cite{ButtikerFourT}. A voltage probe is a large physical component used in mesoscopic experiments 
to sense the local electrochemical potential. Theoretically, such a probe is modeled as a reservoir 
under the \emph{zero electric current condition}: the chemical potential of the probe is 
tuned so that there is no net average electric current into it. If all the terminals have the same 
temperature, then in general there will also be a heat current into the probe. In this 
case, we will  consider this heat current as dissipation. 
In the same spirit, setting all the terminals to the same chemical potential, a 
\emph{temperature probe} is obtained by requiring that the temperature of the corresponding 
reservoir is tuned such that there is no average heat current into it. Note that in this case, there 
may be some charge dissipation into the temperature probe.

Let us decompose the $N$ terminals as follows: the first $N_\p$ reservoirs are temperature 
or voltage probes and the remaining $N_t$ reservoirs are the thermostats maintaining the 
system out of equilibrium (see Figure \ref{Fig1}). In the \emph{voltage probe configuration}, all the 
reservoirs are at the same inverse temperature $\beta=\beta_1=\cdots=\beta_N$, the chemical 
potentials of the thermostats $\vec\mu_\t=(\mu_{N_\p+1},\ldots,\mu_{N_\p+N_\t})$ are given 
and we have to determine the probe parameters $\vec\mu_\p=(\mu_1,\ldots,\mu_{N_\p})$ such 
that $\vec I_\p=(I_1,\ldots,I_{N_\p})=\vec 0$.
Similarly, in the \emph{temperature probe configuration}, all reservoirs are at the same chemical
potential $\mu=\mu_1=\cdots=\mu_N$, the thermostat inverse temperatures 
$\vec\beta_\t=(\beta_{N_\p+1},\ldots,\beta_{N_\p+N_\t})$ are fixed and we have to determine the 
probe parameters $\vec\beta_\p=(\beta_1,\ldots,\beta_{N_\p})$ in order to satisfy 
$\vec J_\p=(J_1,\ldots,J_{N_\p})=\vec0$.

To our knowledge, no result is available on these two problems beyond the linear approximation
around global equilibrium (\ie linear response theory, see \cite{AE}). The same remark applies
to other approaches to the determination of local intensive thermodynamic parameters (see {\sl e.g.,}
\cite{BILP,SI}).

\section{Results}
\label{Results}

Note that in both configurations the \emph{self-consistency condition}
$$
\vec I_\p(\beta,\vec\mu_\t;\vec\mu_\p)=\vec0,
\qquad\text{or}\qquad
\vec J_\p(\mu,\vec\beta_\t;\vec\beta_\p)=\vec0,
$$
is a system of $N_\p$ non-linear equations with $N_\p$ unknown. From a mathematical perspective,
it is not at all obvious 
that such a system admits a solution. Moreover, if a solution exists, it may not be unique.
Our main result ensures existence and uniqueness of reasonable solutions to these
equations.

We shall make the following general assumptions on the lead Hamiltonians and scattering
matrix:
\begin{enumerate}[{\bf (A)}]
\item There exists a constant $E_0$ such that $M_j(E)=0$ for all $E\le E_0$ and $j\in\{1,\dots, N\}$.
\item $M_j(E)\leq C (1+|E|)^\eta$ for some constants $C$ and $\eta$ and all
$E$ and $j\in\{1,\dots, N\}$.
\item For every $j\in\{1,\dots,N_\p\}$ there exists a set $\mathcal{E}_j\subset\real$ of positive Lebesgue 
measure such that
$$
\widetilde M_j(E)=\sum_{i=N_\p+1}^{N_\p+N_\t} t_{ij}(E) > 0,
$$
for all $E\in\mathcal{E}_j$.
\end{enumerate}

\noindent
Condition (A) merely asserts that the lead Hamiltonians are bounded below. Condition
(B) limit the growth of the number of open scattering channels as function of the energy
and is satisfied by any physically reasonable lead Hamiltonian. Finally, Condition (C) can
be roughly rephrased as follows: any probe is connected through an open scattering channel 
to some thermostat.

To formulate our main result, let us denote
\begin{align*}
\underline{\mu}&=\min\{\mu_{N_\p+1},\ldots,\mu_{N_\p+N_\t}\},\\ 
\overline{\mu}&=\max\{\mu_{N_\p+1},\ldots,\mu_{N_\p+N_\t}\},
\end{align*}
the minimal/maximal chemical potential of the thermostats and define in the same way
$\underline{\beta}$ and $\overline{\beta}$.

\begin{theorem}\label{MainThm} Under the above assumptions, the following hold:
\begin{enumerate}[{\rm (1)}]
\item The self-consistency condition $\vec I_\p(\beta,\vec\mu_\t;\vec\mu_\p)=\vec0$ has a unique 
solution $\vec\mu_\p=\vec\mu_\p(\beta,\vec\mu_\t)$ in the set
$\{(\mu_1,\ldots,\mu_{N_\p})\,|\,\mu_j\in[\underline{\mu},\overline{\mu}]\}$.
\item The self-consistency condition $\vec J_\p(\mu,\vec\beta_\t;\vec\beta_\p)=\vec0$ has a unique
solution $\vec\beta_\p=\vec\beta_\p(\mu,\vec\beta_\t)$ in the set
$\{(\beta_1,\ldots,\beta_{N_\p})\,|\,\beta_j\in[\underline{\beta},\overline{\beta}]\}$.
\item In both cases the solution can be computed by means of a rapidly convergent
algorithm (see the next sections for details).
\end{enumerate}
\end{theorem}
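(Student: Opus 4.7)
The plan is to recast each self-consistency condition as a fixed-point equation on a compact box and, via a change of variables, to show that the iteration map is a strict $\ell^1$-contraction, so that existence, uniqueness and the algorithm all follow at once from Banach's theorem applied on that box.

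For voltage probes I would start by using the unitarity identity \eref{Property Transmission} to factor \eref{Eq I} as
\begin{equation*}
I_i = F_i(\mu_i)-\sum_{k=1}^N G_{ik}(\mu_k),
\end{equation*}
where $F_i(\mu):=\int M_i(E)f(E,\beta,\mu)\,\d E$ and $G_{ik}(\mu):=\int t_{ik}(E)f(E,\beta,\mu)\,\d E$ satisfy the two structural identities $F_i\equiv\sum_{k=1}^N G_{ik}$ and $\sum_{i=1}^N G_{ik}'\equiv F_k'$, both immediate consequences of \eref{Property Transmission}. Conditions (A) and (B) ensure these integrals and their derivatives are absolutely convergent and smooth, while (C) guarantees that $F_i$ is strictly increasing and therefore smoothly invertible for each probe $i\le N_\p$. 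The self-consistency condition $I_i=0$ then reads $\vec\mu_\p=\Phi(\vec\mu_\p)$ with $\Phi_i(\vec\mu_\p)=F_i^{-1}\bigl(\sum_k G_{ik}(\mu_k)\bigr)$ (thermostat values held fixed), and monotonicity of the $G_{ik}$'s together with $\sum_k G_{ik}=F_i$ shows immediately that $\Phi$ maps $[\underline\mu,\overline\mu]^{N_\p}$ into itself.

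I would then pass to the conjugate variables $y_i=F_i(\mu_i)$, in which $\Phi$ becomes $\Psi_i(\vec y_\p)=\sum_{k\le N_\p}G_{ik}(F_k^{-1}(y_k))+\sum_{k>N_\p}G_{ik}(\mu_k)$ on the compact box $B=\prod_i[F_i(\underline\mu),F_i(\overline\mu)]$. The Jacobian entries $A_{ik}=G_{ik}'(\mu_k)/F_k'(\mu_k)\ge 0$ share the same denominator along each column $k$, and the identity $\sum_{i=1}^N G_{ik}'(\mu)=F_k'(\mu)$ gives the column sum
\begin{equation*}
\sum_{i=1}^{N_\p}A_{ik}=1-\frac{\int\widetilde M_k(E)\beta f(1-f)\,\d E}{F_k'(\mu_k)}.
\end{equation*}
Condition (C) makes the subtracted fraction strictly positive at every $\mu_k$, and continuity together with compactness of $[\underline\mu,\overline\mu]$ deliver a common lower bound $\delta>0$. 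Hence $D\Psi$ has maximum column sum at most $1-\delta$, equivalently $\|D\Psi\|_{\ell^1\to\ell^1}\le 1-\delta$, so $\Psi$ is a strict contraction of $B$. Banach's fixed-point theorem then yields the unique $\vec y_\p^\star\in B$ together with geometric convergence at rate $(1-\delta)^n$ from any starting point; pulling back through the $F_i^{-1}$'s proves assertion (1), and this iteration is the rapidly convergent algorithm of (3).

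Assertion (2) is handled by the same scheme with $f$ replaced by $(E-\mu)f$ and the role of $\mu$ played by $\beta$. Writing $\widetilde F_i(\beta):=\int(E-\mu)M_i(E)f(E,\beta,\mu)\,\d E$ and $\widetilde G_{ik}$ analogously, the relation $\partial_\beta f=-(E-\mu)f(1-f)$ gives $\widetilde F_i'(\beta)=-\int(E-\mu)^2 M_i(E)f(1-f)\,\d E<0$, so $\widetilde F_i$ is strictly decreasing and the change of variables $y_i=\widetilde F_i(\beta_i)$ puts the analysis in exactly the form above on the box $\prod_i[\widetilde F_i(\overline\beta),\widetilde F_i(\underline\beta)]$. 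The main obstacle throughout is establishing the uniform positive lower bound $\delta$ on the contraction defect: in the voltage probe case pointwise positivity of $\beta f(1-f)$ combined with (C) makes it nearly automatic, whereas in the temperature probe case the weight $(E-\mu)^2$ vanishes at $E=\mu$ and one must check that the positive-measure set $\mathcal E_k$ from (C) still contributes positively after multiplication by $(E-\mu)^2$ — which holds since $\{E=\mu\}$ is a singleton and hence has Lebesgue measure zero.
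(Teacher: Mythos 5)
Your proposal is correct and follows essentially the same route as the paper: the change of variables $y_i=F_i(\mu_i)$ is exactly the paper's $X_j=X_j(\mu_j)$, the map $\Psi$ is the paper's $\vec F$, and the contraction estimate via column sums of the Jacobian together with Condition (C) and the sign-definiteness of $(E-\mu)\partial_\beta f$ for the temperature case reproduces Lemmas \ref{Map} and \ref{ReallyContract} and their adaptation. The only point the paper adds that you omit is the zero-temperature caveat, where $\partial_\mu f$ degenerates and one only gets a non-strict contraction (handled there by a separate remark).
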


{\noindent\bf Remarks.} 1. The restriction on the solution is physically reasonable. We do not expect
a temperature probe to measure a value below the smallest thermostat temperature or above 
the highest one. The same remark applies to voltage probes.

\medskip\noindent
2. An alternative approach to probing local intensive parameters is to adjust \emph{both} 
$\vec\beta_\p$ and $\vec\mu_\p$ in such a way that the electric \emph{and} heat
currents vanish: $\vec I_\p=\vec J_\p=0$. Such probes thus measure simultaneously the temperature 
\emph{and} the chemical potential. Note that in this case there is no dissipation at all into the probes.
Our method does not apply directly to this situation, basically because the function 
$f(E,\beta,\mu)-f(E,\beta',\mu')$ does not preserve its sign as $E$ varies if $\beta\not=\beta'$
\emph{and} $\mu\not=\mu'$. To our knowledge, no result is available for such dual probes
beyond the linear approximation around global equilibrium (see \cite{Jacquet,Thesis}).

%%%%%%%%%%%%%%%%%%%%%%%%%%%%%%%%%%%%%%%%%%%%%%%%%%%%%%%%%%%%%%%%%%%%%%%%%%%%%%%%%%%%%%%%%%%

\section{Proofs}
\label{Proofs}

Let us discuss first the voltage probe configuration. Using the relations \eref{Property Transmission}, 
the self-consistency condition may be written as
\begin{equation}\label{Self-consistency}
I_i = \sum_{j=1}^{N} \int f(E,\beta,\mu_j) [M_j(E) \delta_{ij}-t_{ij}(E)] \d E=0,
\end{equation} 
for $i=1,\dots,N_\p$. Under Assumptions (A) and (B),
\begin{equation}
\mu\mapsto X_j(\mu) = \int f(E,\beta,\mu) M_j(E) \d E,
\end{equation}
defines a strictly increasing continuous function. We shall 
denote by $X\mapsto\mu_j(X)$ the reciprocal function. The key idea of our approach 
is to work with the variable $\vec X=(X_1(\mu_1),\ldots,X_{N_\p}(\mu_{N_\p}))$ instead of 
$\vec\mu_\p$. 

Let $\vec F: \real^{N_\p}\rightarrow\real^{N_\p}$ be defined as
\begin{align*}
F_i(\vec{X})&=\sum_{j=1}^{N_\p}\int f(E,\beta,\mu_j(X_j)) t_{ij}(E) \d E\\
&+\sum_{j=N_\p+1}^{N_\p+N_\t}\int f(E,\beta,\mu_j) t_{ij}(E) \d E.
\end{align*} 
Then, we can rewrite the self-consistency condition \eref{Self-consistency} as a fixed point equation
\begin{equation}
\vec F(\vec{X})=\vec{X}.
\label{Fpoint}
\end{equation}
Set $\underline{X}_j=X_j(\underline{\mu})$, $\overline{X}_j=X_j(\overline{\mu})$ and denote
$$
\Sigma=\{\vec{X}=(X_1,\ldots,X_{N_\p})\,|\,X_j\in[\underline{X}_j,\overline{X}_j]\}.
$$
Notice that the condition $\vec X\in\Sigma$ is equivalent to $\mu_j\in[\underline{\mu},\overline{\mu}]$
for all $j\in\{1,\ldots,N_\p \}$.
\begin{lemma}\label{Map}
$\vec F(\Sigma) \subset \Sigma$. 
\end{lemma}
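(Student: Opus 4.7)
The plan is to exploit two structural facts: (i) the Fermi–Dirac distribution $\mu \mapsto f(E,\beta,\mu)$ is strictly increasing at every fixed $E$ and $\beta$, and (ii) the unitarity identity $\sum_{j=1}^{N} t_{ij}(E) = M_i(E)$ from \eref{Property Transmission} allows one to collapse a sum of transmission factors back into $M_i$, which is exactly the kernel appearing in the definition of $X_i$.

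First I would observe that the condition $\vec X\in\Sigma$ is, by construction of the reciprocal map $X \mapsto \mu_j(X)$ and strict monotonicity of $X_j(\mu)$, equivalent to $\mu_j(X_j)\in[\underline\mu,\overline\mu]$ for $j=1,\dots,N_\p$. Moreover, for the thermostat indices $j=N_\p+1,\dots,N$, the chemical potentials $\mu_j$ already lie in $[\underline\mu,\overline\mu]$ by the very definition of $\underline\mu$ and $\overline\mu$. Consequently, in every term of the sum defining $F_i(\vec X)$, the argument of $f$ lies in $[\underline\mu,\overline\mu]$.

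Next, I would combine monotonicity of $f$ in $\mu$ with the non-negativity of the transmission probabilities $t_{ij}(E)$ to bound each integrand pointwise in $E$ by its value at $\mu=\overline\mu$ (resp.\ $\mu=\underline\mu$). Integrating and summing over all $j=1,\dots,N$, and then using the sum rule $\sum_{j=1}^N t_{ij}(E)=M_i(E)$, gives
\begin{equation*}
\underline X_i
= \int f(E,\beta,\underline\mu) M_i(E)\,\d E
\leq F_i(\vec X)
\leq \int f(E,\beta,\overline\mu) M_i(E)\,\d E
= \overline X_i,
\end{equation*}
which is exactly the required conclusion $F_i(\vec X)\in[\underline X_i,\overline X_i]$. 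The finiteness of all integrals involved is guaranteed by Assumptions (A) and (B), since $M_i(E)$ vanishes below $E_0$ and grows at most polynomially in $|E|$, while $f(E,\beta,\mu)$ decays exponentially as $E\to+\infty$.

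The argument is essentially a one-line monotonicity estimate, and I do not foresee a real obstacle. The only mildly subtle point is making sure to include the thermostat terms in the bracketing: if one forgets that the $\mu_j$ for $j>N_\p$ also lie in $[\underline\mu,\overline\mu]$, the sum rule $\sum_j t_{ij}=M_i$ cannot be applied and one is left with partial sums of transmissions, which need not give clean bounds. Everything else — convergence of the integrals, interchange of sum and integral, use of the reciprocal $\mu_j(X_j)$ — is direct from the hypotheses already established before the statement.
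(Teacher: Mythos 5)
Your proof is correct and follows exactly the paper's argument: monotonicity of $f$ in $\mu$ combined with the sum rule $\sum_{j=1}^{N}t_{ij}(E)=M_i(E)$ from \eref{Property Transmission} to bracket $F_i(\vec X)$ between $\underline X_i$ and $\overline X_i$. Your additional remarks on integrability and on the thermostat terms are sound but not needed beyond what the paper already states.
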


\begin{proof}
Let $\vec{X}\in\Sigma$. The monotony of $\mu\mapsto f(E,\beta,\mu)$ implies
$f(E,\beta,\mu_j(X_j))\leq f(E,\beta,\overline{\mu})$ 
for $j=1,\ldots,N_\p$ and $f(E,\beta,\mu_j)\leq f(E,\beta,\overline{\mu})$ for 
$j=N_\p+1,\ldots,N_\p+N_\t$. The identities \eref{Property Transmission} yields
$$
F_i(\vec{X})\leq\int f(E,\beta,\overline{\mu}) M_i(E)  \d E = \overline{X}_i.
$$
Proceeding similarly, one shows
$$
F_i(\vec{X})\geq \int f(E,\beta,\underline{\mu}) M_i(E) \d E = \underline{X}_i.
$$
\end{proof}

Under Conditions (A) and (B), the function $\vec F$ is continuous. Since $\Sigma$ is compact 
and convex, it follows from Lemma \ref{Map} and the Brouwer fixed point theorem that 
\eref{Fpoint} has a solutions $\vec X^\star\in\Sigma$. We shall use Condition (C) to ensure
uniqueness of this solution. In the next lemma, we use the norm $\|\vec X\|=\sum_{j=1}^{N_\p}|X_j|$.

\begin{lemma}\label{ReallyContract}
Under Assumptions (A), (B) and (C) there exists a constant $\theta<1$ such that
\begin{equation}
\|\vec F(\vec{X})-\vec F(\vec{X}')\|\le\theta\|\vec{X}-\vec{X}'\|,
\label{ContractBound}
\end{equation}
for any $\vec{X},\vec{X}'\in\Sigma$.
\end{lemma}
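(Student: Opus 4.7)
The plan is to show that $\vec F$ shrinks distances in the $\ell^1$-norm by an amount controlled by the ``leakage'' through the thermostat channels, and then to identify this leakage with a fixed positive fraction of $\|\vec X-\vec X'\|$ via assumption (C). Since the thermostat contribution to $F_i$ does not depend on $\vec X$, one has $F_i(\vec X)-F_i(\vec X')=\sum_{j=1}^{N_\p}\int\Delta_j f(E)\,t_{ij}(E)\,\d E$ with $\Delta_j f(E)=f(E,\beta,\mu_j(X_j))-f(E,\beta,\mu_j(X_j'))$. Strict monotonicity of $X\mapsto\mu_j(X)$ and of $\mu\mapsto f(E,\beta,\mu)$ guarantees that $\Delta_j f(E)$ has a sign depending only on $j$. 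Summing $|F_i(\vec X)-F_i(\vec X')|$ over $i\in\{1,\ldots,N_\p\}$, exchanging sums, and invoking the identity $\sum_{i=1}^{N_\p}t_{ij}(E)=M_j(E)-\widetilde M_j(E)$ obtained from \eref{Property Transmission} and the definition of $\widetilde M_j$, I get $\|\vec F(\vec X)-\vec F(\vec X')\|\le\sum_{j=1}^{N_\p}\int|\Delta_j f(E)|\,[M_j(E)-\widetilde M_j(E)]\,\d E$. The constant sign of $\Delta_j f$ together with the defining relation of $X_j(\mu)$ makes the $M_j$-weighted piece equal to $|X_j-X_j'|$, so
\[
\|\vec F(\vec X)-\vec F(\vec X')\|\le\|\vec X-\vec X'\|-R(\vec X,\vec X'),\qquad R(\vec X,\vec X')=\sum_{j=1}^{N_\p}\int|\Delta_j f(E)|\,\widetilde M_j(E)\,\d E.
\]

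The next step is to bound $R$ from below by $c\|\vec X-\vec X'\|$ for some $c>0$. Writing $\Delta_j f(E)=\int_{\mu_j(X_j')}^{\mu_j(X_j)}\partial_\mu f(E,\beta,\nu)\,\d\nu$ with $\partial_\mu f=\beta f(1-f)>0$ and applying Fubini, both $\int|\Delta_j f|\widetilde M_j\,\d E$ and $|X_j-X_j'|$ become integrals of
\[
\Phi_j(\nu)=\int\partial_\mu f(E,\beta,\nu)\widetilde M_j(E)\,\d E,\qquad \Psi_j(\nu)=\int\partial_\mu f(E,\beta,\nu)M_j(E)\,\d E,
\]
respectively, over the same $\nu$-interval contained in $[\underline\mu,\overline\mu]$. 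A pointwise ratio estimate on the integrands then yields $R(\vec X,\vec X')\ge c\,\|\vec X-\vec X'\|$ with $c=\min_{1\le j\le N_\p}\inf_{\nu\in[\underline\mu,\overline\mu]}\Phi_j(\nu)/\Psi_j(\nu)$, and setting $\theta=1-c\in[0,1)$ gives \eref{ContractBound}.

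The main obstacle, and the only place where assumption (C) is actually needed, is showing $c>0$. Strict positivity of $\partial_\mu f$ combined with $\widetilde M_j>0$ on the set $\mathcal E_j$ of positive Lebesgue measure forces $\Phi_j(\nu)>0$ for every $\nu\in[\underline\mu,\overline\mu]$. Under (A)--(B), the polynomial growth of $M_j$ is overwhelmed by the exponential decay of $\partial_\mu f(\cdot,\beta,\nu)$ in $|E-\nu|$ uniformly for $\nu$ in any compact set, so $\Phi_j,\Psi_j$ are finite and continuous on $[\underline\mu,\overline\mu]$, with $\Psi_j$ strictly positive. Compactness of $[\underline\mu,\overline\mu]$ and continuity of the strictly positive ratio $\Phi_j/\Psi_j$ then yield $c_j=\inf_\nu\Phi_j(\nu)/\Psi_j(\nu)>0$ for each $j$, hence $c>0$ as required.
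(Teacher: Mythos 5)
Your proof is correct and follows essentially the same route as the paper: you isolate the thermostat ``leakage'' via the unitarity identity $\sum_{i=1}^{N_\p}t_{ij}=M_j-\widetilde M_j$, exploit the positivity of $\partial_\mu f$, and end up with exactly the paper's contraction constant $\theta=1-\min_j\inf_\nu\int\widetilde M_j\,\partial_\mu f\,\d E\big/\int M_j\,\partial_\mu f\,\d E$. The only cosmetic difference is that the paper bounds the $\ell^1$ operator norm of the Jacobian $D_{ij}=\partial F_i/\partial X_j$ and applies the mean value inequality along the segment in $\Sigma$, whereas you integrate the increment directly in the variable $\nu=\mu_j$ via Fubini --- the same computation in integrated form.
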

\begin{proof}
Denote by $D(\vec X)$ the derivative of the map $\vec F$ at $\vec X$. Then one has
$$
\vec F(\vec{X})-\vec F(\vec{X}')=\int_0^1 D(t\vec X+(1-t)\vec X')(\vec X-\vec X')\d t.
$$
Since $\Sigma$ is convex, the estimate \eref{ContractBound} holds for any 
$\vec{X},\vec{X}'\in\Sigma$ with
$$
\theta=\max_{\vec X\in\Sigma}\|D(\vec X)\|,
$$
where the matrix norm is given by
$$
\|D(\vec X)\|=\max_{1\le j\le N_\p}\sum_{i=1}^{N_\p}
\left|D_{ij}(\vec X)\right|.
$$
A simple calculation yields
$$
D_{ij}(\vec X)=\frac{\partial F_i}{\partial X_j}(\vec X)
=\frac{\int g(E,\beta,\mu_j(X_j))t_{ij}(E)\d E}{\int g(E,\beta,\mu_j(X_j))M_j(E)\d E},
$$
where the function $g(E,\beta,\mu)=\partial_\mu f(E,\beta,\mu)$ is strictly positive. It follows that
$$
\sum_{i=1}^{N_\p}\left|D_{ij}(\vec X)\right|
\le1-\frac{\int\widetilde M_j(E)g(E,\beta,\mu_j(X_j))\d E}{\int M_j(E)g(E,\beta,\mu_j(X_j))\d E},
$$
and hence
$$
\theta\le1-\min_{\mu\in[\underline{\mu},\overline{\mu}]\atop 1\le j\le N_\p}
\frac{\int\widetilde M_j(E)g(E,\beta,\mu)\d E}{\int M_j(E)g(E,\beta,\mu)\d E}.
$$
Condition (C) clearly implies that $\theta<1$.
\end{proof}

It follows from Lemmas \ref{Map}, \ref{ReallyContract} and the Banach fixed point theorem
that  \eref{Fpoint} has a unique solution $\vec X^\star$ in $\Sigma$. Moreover, the sequence
of iterates $\vec X_n=\vec F(\vec X_{n-1})$ converges to $\vec X^\star$ for any initial
value $\vec X_0\in\Sigma$ with the estimate
$$
\|\vec X_n-\vec X^\star\|\le\theta^n\|\vec X_0-\vec X^\star\|.
$$

In the temperature probe configuration, one may proceed in a completely similar way in terms 
of the functions
\begin{equation}
\beta\mapsto Y_j(\beta)=\int f(E,\beta,\mu) (E-\mu) M_j(E)\d E,
\end{equation}
their reciprocal $Y\mapsto\beta_j(Y)$ and
\begin{align*}
G_i(\vec{Y})=&\sum_{j=1}^{N_\p}\int f(E,\beta_j(Y_j),\mu) (E-\mu)t_{ij}(E)\d E\\
&+\sum_{j=N_\p+1}^{N_\p+N_\t} \int f(E,\beta_j,\mu) (E-\mu)t_{ij}(E)\d E.
\end{align*} 
A natural set $\Sigma$ can then be defined as before. The crucial observation is that the function
$\partial_\beta f(E,\beta,\mu) (E-\mu)$ has a constant sign.

\medskip\noindent{\bf Remark.} Strictly speaking, Lemma \ref{ReallyContract} does not hold
at zero temperature because the Fermi function $f$ is not positive in this case. Nevertheless,
one easily shows that, under Assumptions (A)--(C), the estimate
$$
\|\vec F(\vec X)-\vec F(\vec X')\|<\|\vec X-\vec X'\|,
$$
holds for $\vec X,\vec X'\in\Sigma$, $\vec X\not=\vec X'$ provided 
$[\underline{\mu},\overline{\mu}]\subset\cap_j\mathcal{E}_j$. The uniqueness of the fixed point 
$\vec X^\star$ immediately follows. Moreover, it also follows that the sequence of iterates 
$\vec X_n=\vec F(\vec X_{n-1})$ converges to $\vec X^\star$ for any choice of $\vec X_0\in\Sigma$,
although without {\sl a priori} control on the speed of convergence.

%%%%%%%%%%%%%%%%%%%%%%%%%%%%%%%%%%%%%%%%%%%%%%%%%%%%%%%%%%%%%%%%%%%%%%%%%%%%%%%%%%%%%%%%%%%%

\section{Examples}

As a first example, let us consider the one-channel three-terminal system represented in Figure \ref{Fig2},
where two thermostats (2 and 3) drive the system (a perfect lead) out of equilibrium and a probe (1) 
is connected to the system by a $3 \times 3$ scattering matrix $S$.

\begin{figure}[htbp]
\begin{center}
\includegraphics[width=12cm]{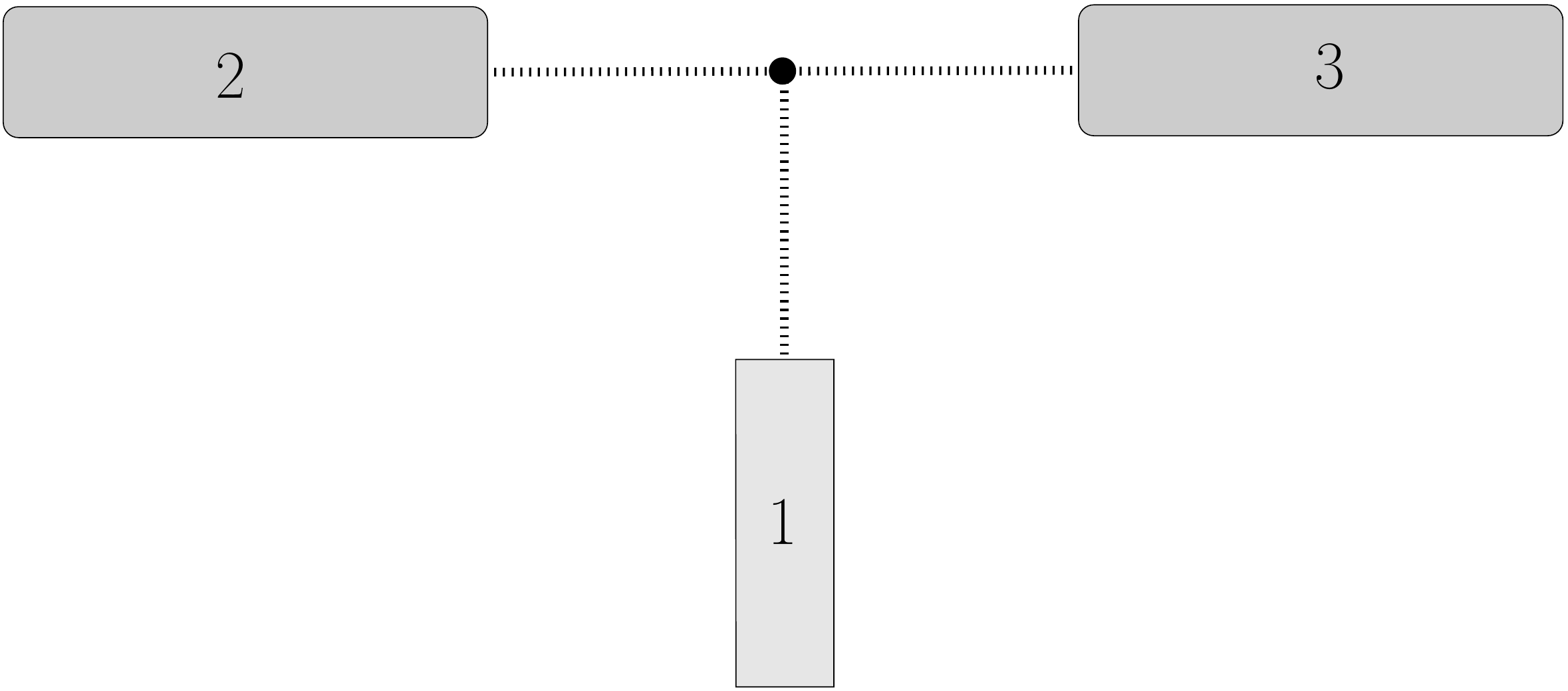}
\caption{A one-channel system with two thermostats (2 and 3) and one probe (1).}\label{Fig2}
\end{center}
\end{figure}

Let us consider the energy-independent scattering matrix introduced in \cite{Buttiker-Small}:
\begin{equation}\label{Scattering matrix Markus}
S = \matrice{
-(a+b)  & \sqrt{\varepsilon} & \sqrt{\varepsilon}\\
\sqrt{\varepsilon} & a & b \\ \sqrt{\varepsilon} & b & a}~,
\end{equation}
where $a=\frac{1}{2} (\sqrt{1-2\varepsilon}-1)$, $b=\frac{1}{2} (\sqrt{1-2\varepsilon}+1)$ and
$\varepsilon \in (0,\frac{1}{2}]$. Here, $\varepsilon = 0$ corresponds to the uncoupled situation 
(which is excluded) and $\varepsilon = \frac{1}{2}$ to the maximally coupled one. Let us set 
$T = T_1 = T_2 = T_3$ and define the energy interval in \eref{Eq I}--\eref{Eq J} as $[0,\infty)$. 
If $T=0$, then in the linear regime one can compute analytically the self-consistent parameter 
$\mu^*_1$ \cite{Buttiker-Symmetry,Jacquet}:
\begin{equation}\label{Linear mu2}
\mu_1^*(T=0,\mu_2,\mu_3,\varepsilon) = \frac{\mu_2 + \mu_3}{2} + \mathcal{O}(|\mu_2-\mu_3|^2)~.
\end{equation}
We have checked that our numerical results are consistent with the relation \eref{Linear mu2}. 
In the non-linear regime, we made the following observations: Let $T>0$, $\mu_2, \mu_3 \in \real$ 
be fixed, then the sequence $\{F^n(X_{0})\}_{n=0}^{\infty}$, with $X_{0} \in \Sigma$, converges 
and gives rise to a value $\mu_1^*$ independent of $\varepsilon$ and conveniently written as
\begin{equation}
\mu_1^*(T,\mu_2,\mu_3,\varepsilon) = \frac{\mu_2 + \mu_3}{2}  + \mathcal{N}(T,\mu_2,\mu_3)~,
\end{equation}
where the function $\mathcal{N}(T,\mu_2,\mu_3)$ measures the non-linearity. Note, in particular,
that the weak coupling limit $\varepsilon \rightarrow 0$ does not lead to a different value of $\mu_1^*$.
Since, by \Theorem~1,
$$
\mu_1^*(T,\mu_2,\mu_3,\varepsilon)\in[\min\{\mu_2,\mu_3\},\max\{\mu_2,\mu_3\}],
$$
one deduces that $\mathcal{N}(T,\mu_2,\mu_3) \in [-\Delta\mu/2,\Delta\mu/2]$, with 
$\Delta \mu = |\mu_2-\mu_3|$. In Figures~\ref{Fig3} and \ref{Fig4}, we have plotted the temperature and 
potential dependence of $\mathcal{N}(T,\mu_2,\mu_3)$, respectively. Let us recall that 
$\mathcal{N}(T,\mu_2,\mu_3)=0$ corresponds to the linear case.

\begin{figure}[htbp]
\begin{center}
\includegraphics[width=12cm]{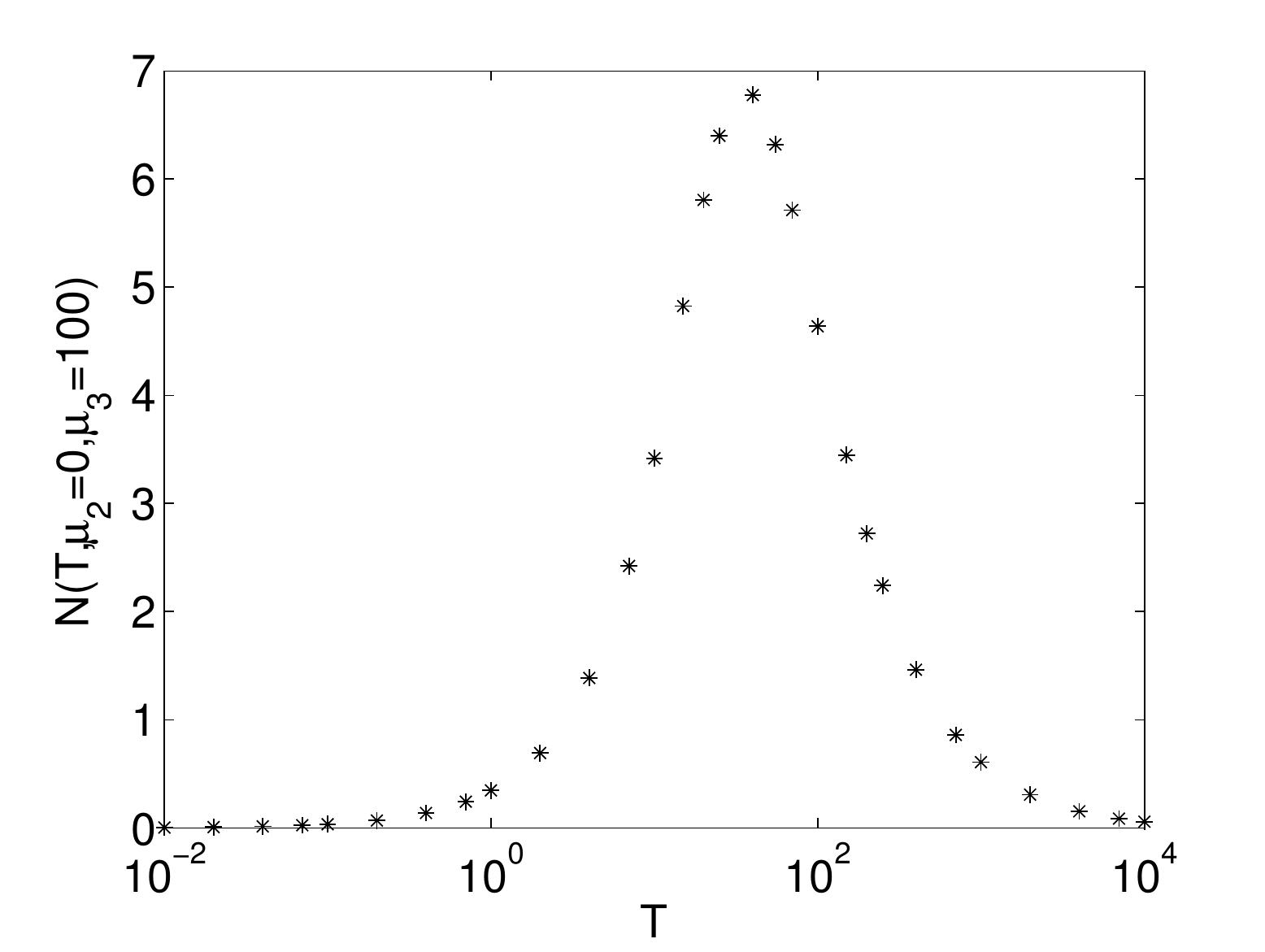}
\caption{The temperature dependence of $\mathcal{N}(T,\mu_2=0,\mu_3=100)$.}\label{Fig3}
\end{center}
\end{figure}

\begin{figure}[htbp]
\begin{center}
\includegraphics[width=12cm]{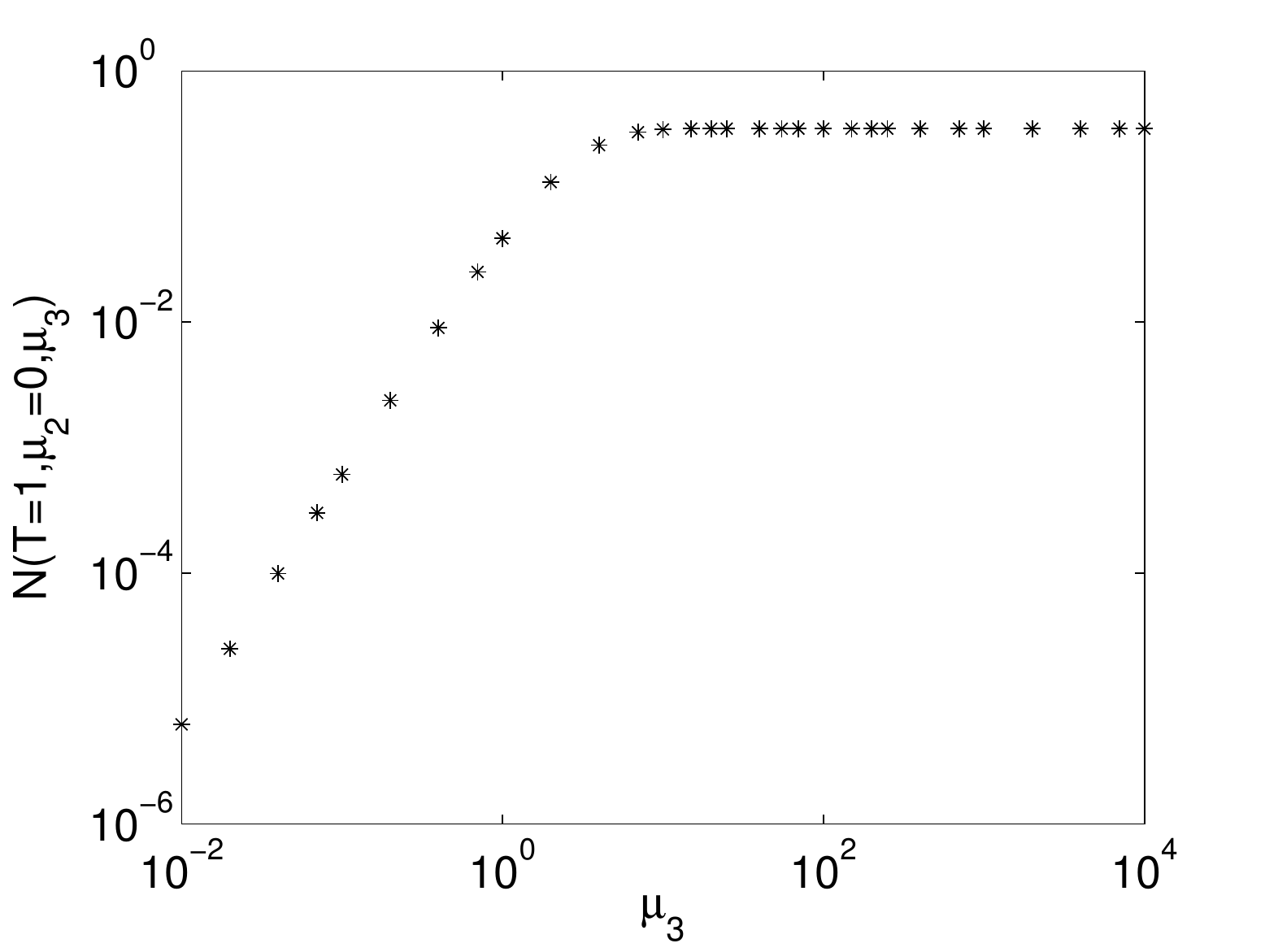}
\caption{The potential dependence of $\mathcal{N}(T=1,\mu_2=0,\mu_3)$.}\label{Fig4}
\end{center}
\end{figure}

Note that the curve in Figure~\ref{Fig4} reaches a constant value $\mathcal{N}_\infty(T)=\lim_{\mu_3 \rightarrow \infty} \mathcal{N}(T,\mu_2=0,\mu_3)$ as $\mu_3$ increases. Interestingly, this is also the case for other values of $T$ and we observed the following scaling law:
\begin{equation}
\mathcal{N}_\infty(\lambda T)=\lambda \mathcal{N}_\infty(T)~, \hspace{5mm} \forall\lambda > 0.
\end{equation}

If we attach more probes to the lead and describe all the connection points in terms of the 
\emph{same} scattering matrix $S$ (see \cite{Jacquet} for the construction of the global scattering 
matrix), then we found that all the probes measure the \emph{same} value, as if all the probes were 
connected to the same point, but in general this value does not coincide with the one-probe 
measurement (since adding more probes somehow perturbs the system). This phenomenon can be 
easily understood: for example, if two probes are attached to the lead, then one can compute 
analytically the global $4 \times 4$ transmission matrix $\{t_{ij}(E)\}$, and one finds that it is 
symmetric and that $t_{31}(E)=t_{32}(E)=t_{41}(E)=t_{42}(E)$. This means that the two probes 
are equally coupled to the left and right thermostats and, consequently, that $\mu_1^*=\mu_2^*$. Note, 
however, that this is not true in general.

As a second example, we consider an Aharonov-Bohm (AB) ring threaded by a magnetic flux $\phi$
and with a quantum dot (QD) embedded in one of its arms. This system has been subject to intensive
investigations both in the independent electron approximation \cite{BILP,TT1} and including interaction 
effects \cite{KAKI,TT2,Ho}. We shall study a discrete (tight binding) independent electron model 
closely related to the work by Aharony {\sl et al.}\;\cite{Aharony} (see Figure~\ref{Fig5}). This theoretical model 
is supposed to imitate an experimental setup \cite{Schuster}. It is assumed that a gate voltage $V$ 
is applied on the QD, allowing to vary the energies of its eigenstates.
\begin{figure}[htbp]
\begin{center}
\includegraphics[scale=0.3]{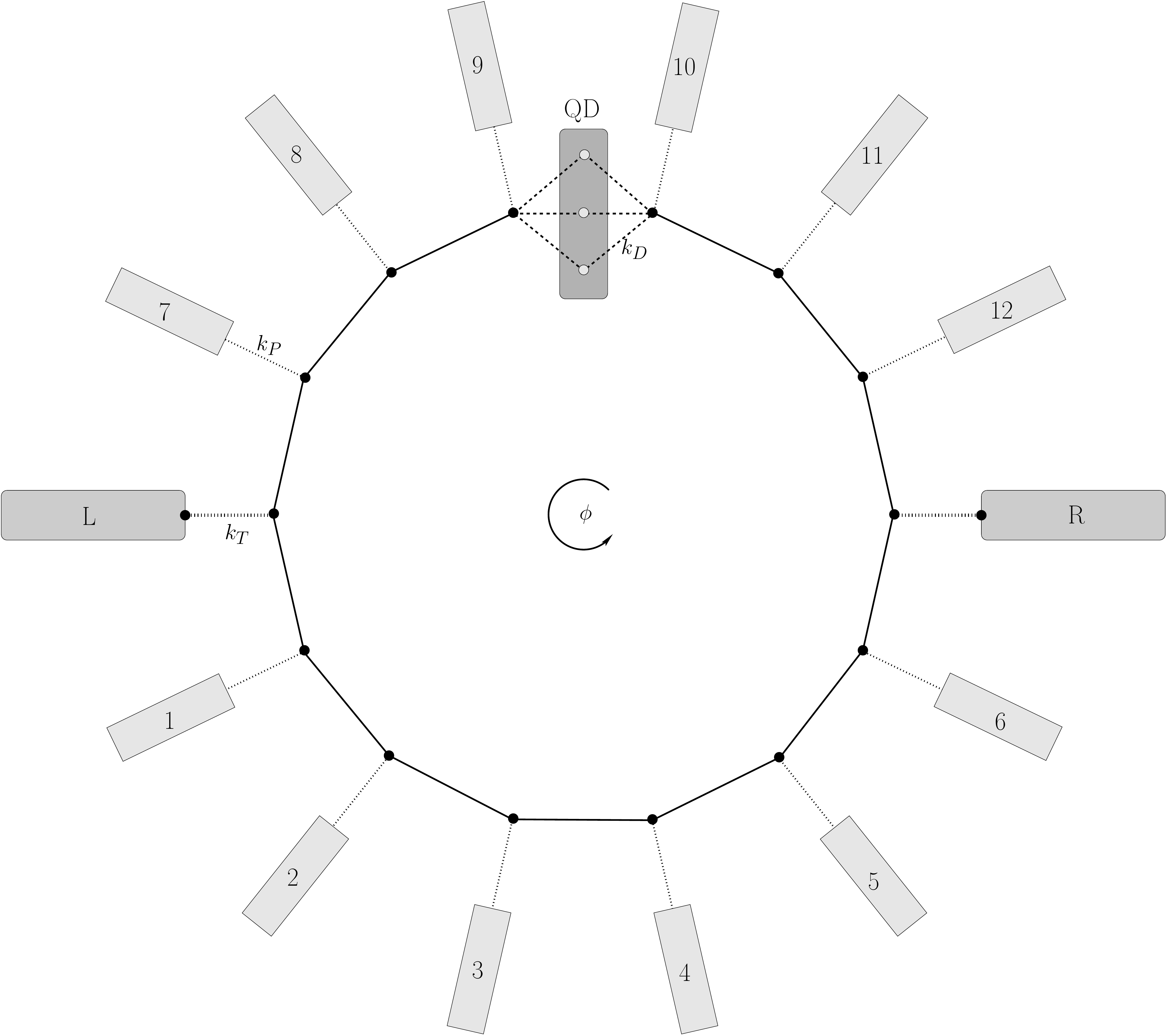}
\caption{The open AB interferometer. A magnetic flux $\phi$ crosses the ring and a QD is
placed in the upper branch of the ring. The terminals 1 to 12 are voltage probes and the terminals 
L and R are thermostats. $k_T$, $k_P$ and $k_D$ are the hopping constants coupling the ring
to the thermostats, the probes and the QD. The energy levels of the dot are $V$, $V+U$ and
$V+2U$.}\label{Fig5}
\end{center}
\end{figure}
Let us write $t_{\rm QD}=\sqrt{T_{\rm QD}} e^{i \alpha_{\rm QD}}$
for the transmission amplitude of the QD. At fixed energy $E$, the total transmission probability 
$t_{LR}$ from the reservoirs L to R depends on the gate voltage $V$ and is a periodic function 
of the AB-flux $\phi$. Expanding this function as a Fourier series one gets:
\begin{equation}\label{Expansion Fourier}
t_{LR}(V,\phi) = A(V) + B(V) \cos(\phi + \beta(V)) + \cdots.
\end{equation}
It is well known that in the absence of dissipation (\ie for a closed interferometer in the terminology of 
\cite{Aharony}) the Onsager-Casimir reciprocity relations \cite{BILP} imply that the phase $\beta(V)$
can only take the values $0$ and $\pi$. Hence, as the gate voltage $V$ varies, the phase $\beta(V)$
makes abrupt jumps between these two values. However, 
dissipation can change this picture. By adding purely absorbing reservoirs (\ie allowing only outgoing 
currents) along the branches of the ring Aharony et al.\;\cite{Aharony} found criteria as to when the 
"experimental" phase $\beta(V)$, which depends on the details of the opening (\ie the coupling the 
absorbing reservoirs), is a good approximation of the "intrinsic" phase $\alpha_{\rm QD}(V)$ of the 
QD. Here we present some numerical results showing that one may capture the main properties
of $\alpha_{\rm QD}$ without introducing any charge dissipation in the absorbing reservoirs, \ie that 
$\beta$ behaves essentially as $\alpha_{\rm QD}$ even if one replaces the absorbing reservoirs
of  \cite{Aharony} by voltage probes, which we recall allow only heat dissipation.
\begin{figure}[htbp]
\begin{center}
\includegraphics[width=12cm]{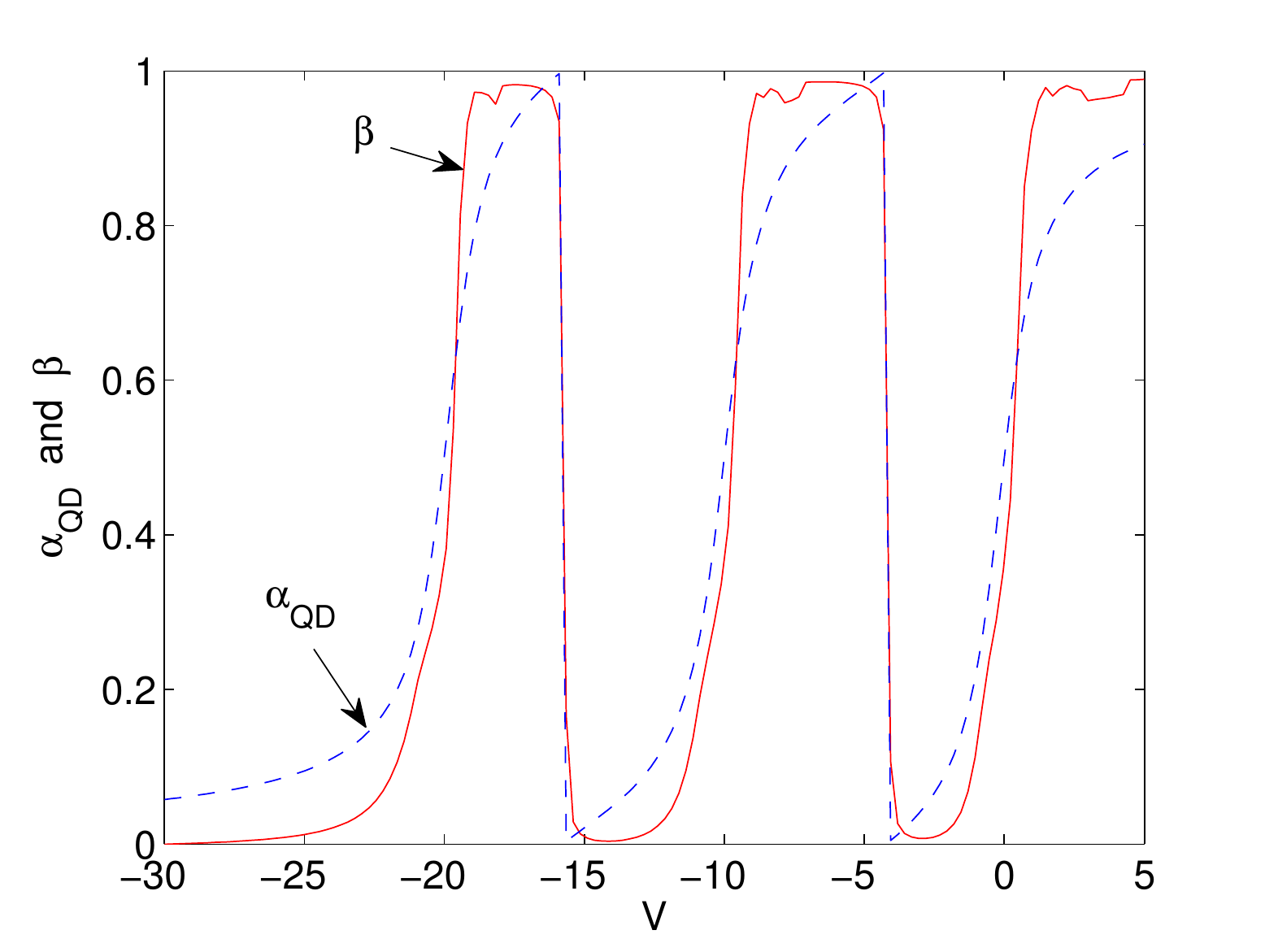}
\caption{The "intrinsic" phase $\alpha_{\rm QD}$ (blue dashed line) and the "experimental" phase 
$\beta$ (red solid line) as a function of the gate voltage $V$ applied on the QD. This corresponds 
to the setup of Figure~\ref{Fig5} with a three-level QD, all temperatures equal to zero, the chemical 
potential of the thermostats are $\mu_{L} = 0$ and $\mu_{R}=0.2$. The couplings to the 
thermostats, probes and QD are $k_{T} = 0.5$, $k_{P} = 0.5 \ k_{T}$, and
 $k_{D} = 0.01 \ k_{T}$, respectively (the parameter $k_{P}$ plays a role 
similar to $J_x$ in \cite{Aharony}). All remaining parameters are set as in \cite{Aharony}.}\label{Fig6}
\end{center}
\end{figure}
However, instead of 
considering the expansion \eref{Expansion Fourier}, we tried to be closer to actual experimental 
measurements by extracting the "experimental" phase $\beta$ from the Fourier expansion of the 
steady electric current between the two thermostats:
$$
I_{L}=-I_{R}=\hat I_0(V)+\hat I_1(V)\cos(\phi+\beta(V))+\cdots.
$$
The results are shown on Figure~\ref{Fig6}. One sees that 
the curve $\beta(V)$ follows relatively closely $\alpha_{\rm QD}(V)$, and in particular reproduces 
accurately the successive jumps of $\alpha_{\rm QD}(V)$ from 1 to 0 (the values have been 
normalized, thus 1 corresponds to $\pi$ in the paper \cite{Aharony}).


\begin{thebibliography}{99}
\bibitem{Jou}
J. Casas-Vazquez and D. Jou, Rep. Prog. Phys. \textbf{66}, 1937 (2003).
\bibitem{Ru1}
D. Ruelle, Proc. Natl. Acad. Sci. USA \textbf{100}, 3054 (2003).
\bibitem{Ru2}
D. Ruelle, in {\sl Boltzmann's Legacy.} G. Gallavotti, W.L. Reiter and J. Yngvason editors,
Europ. Math. Soc., Z\"urich, 2007, p. 89.
\bibitem{ButtikerFourT}
M. B\"uttiker, Phys. Rev. Lett. \textbf{57}, 1761--1764 (1986).
\bibitem{AE}
H.-L. Engquist and P.W. Anderson, Phys. Rev. B {\bf 24}, 1151 (1981).
\bibitem{Thesis}
Ph. Jacquet, Thesis, University of Geneva, unpublished (2009).
\bibitem{Buttiker-Symmetry}
M. B\"uttiker, IBM J. Res. Dev. \textbf{32}, 317 (1988).
\bibitem{ButtikerScattering}
M. B\"uttiker, Phys. Rev. B \textbf{46}, 12485 (1992).
\bibitem{Butcher}
P. N. Butcher, J. Phys. Condens. Matter \textbf{2}, 4869 (1990).
\bibitem{CA}
W. Aschbacher, V. Jak{\v{s}}i{\'c}, Y. Pautrat and C.-A. Pillet, J. Math. Phys. \textbf{48}, 032101 (2007).
\bibitem{N}
G. Nenciu, J. Math. Phys. {\bf 48}, 033302 (2007).
\bibitem{BILP}
M. B\"uttiker, Y. Imry, R. Landauer and S. Pinhas, Phys. Rev. B {\bf 31}, 6207 (1985).
\bibitem{SI}
U. Sivan and Y. Imry, Phys. Rev. B {\bf 33}, 551 (1986).
\bibitem{Jacquet}
Ph. A. Jacquet, J. Stat. Phys. \textbf{134}, 709 (2009).
\bibitem{Buttiker-Small}
M. B\"uttiker, Phys. Rev. B \textbf{32}, 1846 (1985).
\bibitem{TT1}
J. Takahashi and S. Tasaki, J. Phys. Soc. Jpn. {\bf 74}, 261 (2005).
\bibitem{KAKI}
K. Kobayashi, H. Aikawa, S. Katsumoto and Y. Iye, Phys. Rev. Lett. {\bf 88}, 256806 (2002)
and Phys. Rev. B {\bf 68}, 235304 (2003).
\bibitem{Ho}
W. Hofstetter, J. K\"onig and H. Schoeller: Phys. Rev. Lett. {\bf 87}, 156803 (2001).
\bibitem{TT2}
J. Takahashi and S. Tasaki, J. Phys. Soc. Jpn. {\bf 75}, 094712 (2006).
\bibitem{Aharony}
A. Aharony, O. Entin-Wohlman, B. I. Halperin  and Y. Imry, Phys. Rev. B \textbf{66}, 115311 (2002).
\bibitem{Schuster}
R. Schuster, E. Buks, M. Heiblum, D. Mahalu, V. Urdansky and H. Shtrikman, Nature (London) \textbf{385}, 417 (1997).
\end{thebibliography}
\end{document}